\newcommand{\e}{\mathrm{e}}
\newcommand{\rd}{\mathrm{d}}
\newcommand{\R}{\mathbb{R}}
\newcommand{\bF}{\mathbf{F}}
\newcommand{\bM}{\mathbf{M}}
\newcommand{\bp}{\mathbf{p}}
\newcommand{\A}{\ensuremath{\mathbb{A}}\xspace}
\newcommand{\B}{\ensuremath{\mathbb{B}}\xspace}
\newcommand{\erf}{\mathrm{erf}}
\newtheorem{lemma}{Lemma}
\newtheorem{remark}{Remark}
\newtheorem{theorem}{Theorem}
\newtheorem{definition}{Definition}
\newtheorem{problem}{Problem}
\title{From fixation probabilities to $d$-player games: an inverse problem in evolutionary dynamics}
\author{Fabio A. C. C. Chalub\thanks{Departamento de Matemática and Centro de Matemática e Aplicações, Faculdade de Ciências e Tecnologia, Universidade Nova de Lisboa, Quinta da Torre, 2829-516, Caparica, Portugal.} and Max O. Souza\thanks{Instituto de Matemática e Estatística, Universidade Federal Fluminense, R. Prof. Marcos Waldemar de Freitas Reis, s/n, 24210-201, Niter\'oi, RJ, Brasil.}}
\begin{document}

\maketitle

\begin{abstract}
The probability that the frequency of a particular trait will eventually become unity, the so-called fixation probability, is a central issue in the  study of population evolution. Its computation, once we are given a stochastic finite population model without mutations and a (possibly frequency dependent) fitness function, is straightforward and it can be done in several ways. Nevertheless, despite the fact that the fixation probability is an important macroscopic property of the population, its  precise knowledge does not give any clear information about the interaction patterns among individuals in the population. Here we address the inverse problem: From a given fixation pattern and population size, we want to infer what is the game being played by the population. This is done by first exploiting the framework developed in FACC Chalub and MO Souza, J. Math. Biol. 75: 1735, 2017., which yields a  fitness function that realises this fixation pattern in the Wright-Fisher model. This  fitness function always exists, but it is not necessarily unique. Subsequently, we show that any such  fitness function can be approximated, with arbitrary precision, using $d$-player game theory, provided $d$ is large enough. The pay-off matrix that emerges naturally from the approximating game will provide useful information about the individual interaction structure that is not itself apparent in the fixation pattern. We present extensive numerical support for our conclusions.
\end{abstract}

\section{Introduction}

Evolutionary Game Theory (EGT) was introduced in the earlier 70's of the twentieth century in the seminal work by  \cite{smith1973logic}, as a convenient and useful tool to describe animal interactions, where  no rational behaviour can be assumed --- in contradistinction  to its traditional application in economics. The first achievement of EGT was the derivation of mixed  stable evolutionary states in infinite populations~\citep{JMS}. Subsequently, the replicator equation was introduced in evolutionary dynamics, using the so called fitness, to model relative growth of biological populations \citep{taylor1978evolutionary}. Assuming that individuals in a given population interact according to game theory, their fitnesses were obtained from the pay-offs of the underlying games; see~\cite{HofbauerSigmund1998} and references therein.

On the other hand,  the much older Wright-Fisher (WF) process was introduced in the late 30's as a model for the description of the evolution of gene frequencies in finite populations subdivided into finitely many different groups, homogeneous in all characteristics but one under study~\citep{Fisher1,Wright1,Wright2}. One of the fundamental features of the WF process, as considered here and in several other references~\citep{Ewens_2004,Burger_2000}, is that it includes genetic drift, but not mutation.  Hence, if the population reaches a homogeneous state, i.e. a state with all individuals being of a single type,  then it will remain in this way forever--- in Markov chain theory parlance these are termed absorbing states \citep{Karlin_Taylor_intro}.

Finite population processes with absorbing states share one important characteristic: the dynamics will eventually reach an absorbing state. In the case of the WF process, this means that the population will become homogeneous in the long run. This behaviour is important when modelling situations in time scales that are faster than the mutational one --- so that  one expects  selection and genetic drift to have enough time to act. In particular, the latter will be the ultimate cause for one type to  fixate, i.e. to  displace all the other types before a new mutation takes place. In this framework, an important question is the following: given the current state of the population, what is probability that a certain type will fixate?  This is considered one of the central questions in the mathematical study of evolution. It is worth noting that, while there is no explicit formula for the fixation probability of the finite population WF process (except for the so called neutral evolution), its numerical determination can be easily done using  finite Markov chains and/or numerical linear algebra techniques~\citep{Karlin_Taylor_intro}.

Initial work with the WF process assumed constant fitnesses. This assumption has been already relaxed, in the realm of mathematical population genetics, as early as in \citep{EthierKurtz}. However, it was just in the last decade that EGT and WF processes were first combined as important modelling tools for studying the evolution of a population~\citep{Imhof:Nowak:2006}. 
 
Before moving on, we should point out  that another important finite population process is the Moran process. 
On one hand, following~\citet{Hartle_Clark}, we define the Darwinian fitness as the ratio of the expected number of individuals of a given type between successive generations. On the other hand, in the classical definition of frequency dependent Moran process, the probability to be selected for reproduction is assumed to be proportional to the type fitness~\citep{Nowak:06}. A simple calculation shows that these concepts are not, in general, compatible, cf.~\cite{ChalubSouza:2017a}.

From now on, we will restrict ourselves to finite populations with two types, but we will not make any \emph{a priori} assumption about their interaction. Indeed, we will show that almost any arbitrary interaction can be modelled by $d$-player game theory, provided we do not impose any restriction on the game size. As  already observed \citep{pacheco2009evolutionary,souza2009evolution,wu2013dynamic,pena2014gains,ChalubSouza16,ChalubSouza:2017a,czuppon2018disentangling,ChalubSouza18}, $d$-player games with $d\geq3$ can exhibit quite distinct behaviour from the 2-player ones. In the multi-type case, we expect that a similar approach might work, but much remains to be done---indeed, multi-type evolution is an important source of open mathematical problems.

 Initially, as mentioned above, only 2-player game theory was used to describe the evolution of a population in the non-constant fitness case, also called ``frequency dependent fitness''; in that case the fitness is an affine function with respect to the fraction of different individuals in the population; see, however,~\citet{Kurokawa:Ihara:2009,Gokhale:Traulsen:2010,Gokhale:Traulsen:2014} for the recent use of $d$-player game theory in the description of evolutionary processes. 
 
 The authors have been using arbitrary fitness functions, i.e. fitness functions not necessarily derived from EGT, for almost a decade now when studying continuous approximations--- cf. \citet{ChalubSouza09b,ChalubSouza14a,ChalubSouza16}. However, it was only when studying finite populations that the chasm became more evident. In \cite{ChalubSouza:2017a}, it was shown that fixation patterns of the WF process derived from 3-player game theory can be fundamentally distinct from their 2-player game theory counterpart. In  the latter framework, fixation probability is always an increasing function of the initial presence, whereas in the former, or in more general frameworks, this is not necessary the case. Non-increasing fixation probabilities seem to be a topic not fully explored in the evolutionary dynamic literature; we however conjecture --- at this point, without any experimental base -- in~\cite{ChalubSouza:2017a} that this may be related to missing record fossils. In fact, if the fixation probability is a non increasing function of the focal type initial presence, then an homogeneous state is more likely to be attained from a \emph{jump} from a mixed state than trough a slow process of accumulation. Finally, we point out that non monotonicity of fixation might also appear in 2-player games, due to the stochastic fluctuations in population size  --- cf. \cite{huang2015stochastic} and \cite{czuppon2018fixation}, but without explicitly mentioning it.
 
 From a practical point of view, once the fitness functions are chosen the fixation probability is well determined and, indeed, it can be easily computed numerically. The present work is devoted, however, to the inverse problem: to which extent does the fixation probability determine the fitnesses of all different types in the WF evolution, and how  can the fitnesses functions be reasonably approximated using $d$-player EGT, for $d$ large enough.

 A partial solution was presented in~\cite{ChalubSouza:2017a}, where it was shown
  that any non-degenerated fixation probability vector can be obtained as the result of evolution by the WF process, with selection given by a certain discrete relative fitness function; furthermore this relative fitness function is uniquely determined (modulo some trivial transformations) if and only if the fixation probability vector is increasing as a function of the initial presence of the focal type. 
 
 In the present work, we will go one step further and show that any fitness can be arbitrarily well approximated by pay-offs obtained from finite-player game theory. Therefore, we will be able to show how any fixation pattern can be approximated using $d$-player game theory. In particular, this shows, or at least indicates, the minimum number of individuals required in each relevant interaction that are necessary to conveniently describe the long run evolution of a WF population. To the best of our knowledge, this is the only work  where microscopic interactions among individuals in a population can be inferred from the results of long term evolution, in particular, from the fixation probability. Fig.~\ref{fig:direct_inverse} summarizes the direct problem in WF dynamics, and the inverse problem that is studied in this work.
 
 \begin{figure}
  \centering
  \includegraphics[width=0.48\textwidth]{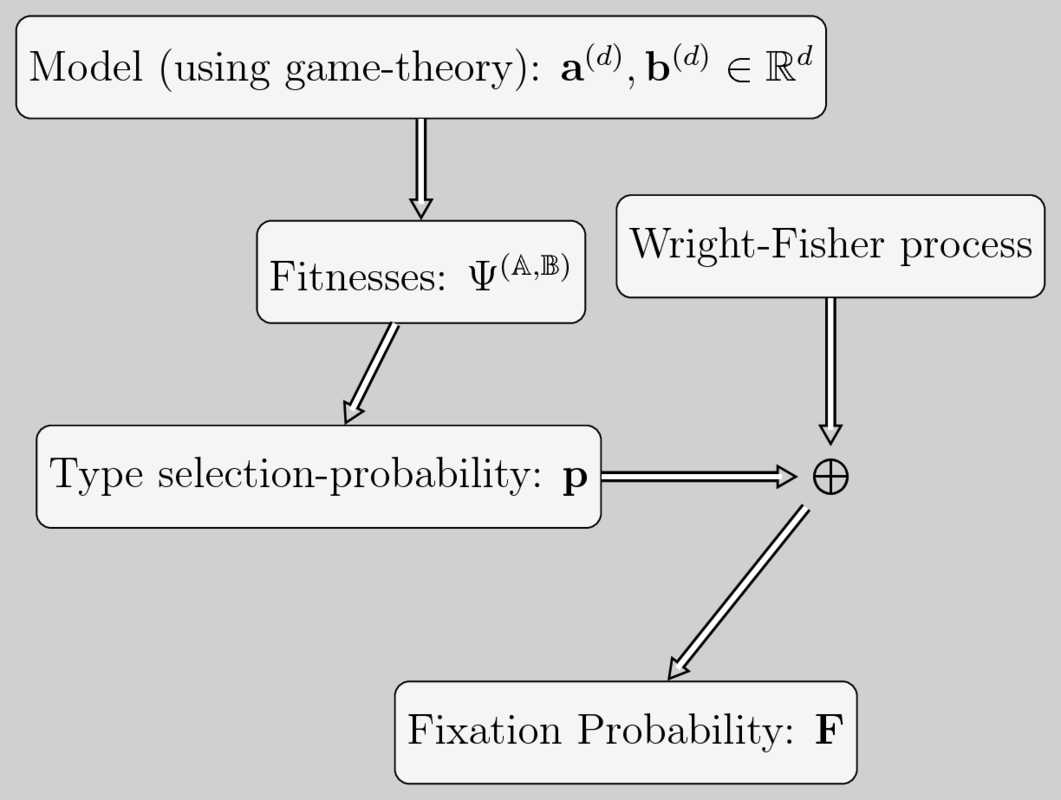}
  \includegraphics[width=0.48\textwidth]{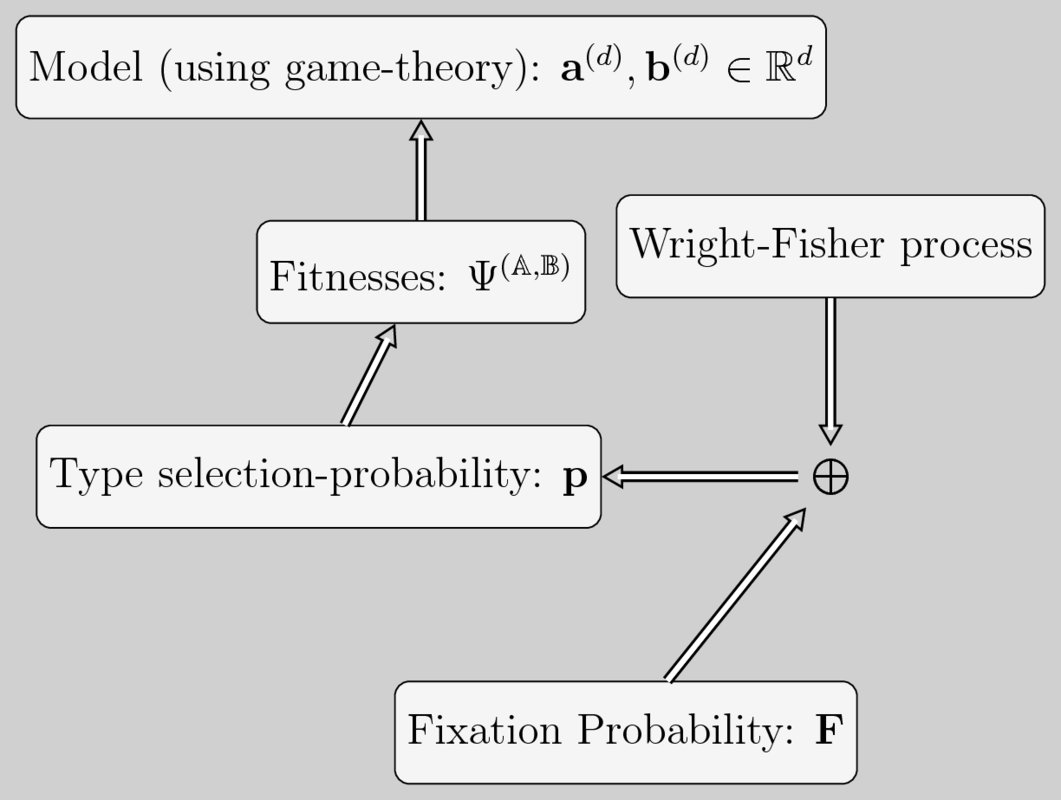}
  \caption{Direct (left) and inverse (right) problem in the WF dynamics. In the former case, from an assumed interaction between individuals, we proceed step-by-step until finding the fixation probability for all initial conditions. In the latter case, for a given $d$ and fixation, we find the $d$-player game that better approximates the given fixation.}
  \label{fig:direct_inverse}
 \end{figure}

\section{Fitness, fixation and the Wright-Fisher process}
\label{sec:prelim}

In this Section,  we will  review a number of definitions and results on the Wright-Fisher (WF) process that will be relevant in the sequel. This  review will follow closely  \cite{ChalubSouza:2017a} --- in particular, we  will consider WF processes for two types, \A and \B, that are parametrised by \textit{type selection probabilities} (TSP). More precisely, we will say that the population is at state $j$, when there are $j$ individuals of type \A. We will write $\bM=\left(M_{ij}\right)_{i,j=0,\dots,N}$ for the process transition matrix --- the  probability to go from state $j$ to $i$ ---  and $\bp=\left(p_i\right)_{i=0,\ldots,N} \in \lbrace0\rbrace\times [0,1]^{N-1}\times\lbrace1\rbrace$ for the vector of TSP --- the probability that an individual of  type \A is chosen for reproduction in a population with $i$ individuals of this type. 

In this framework, we have that
\[
M_{ij}=\binom{N}{i}p_j^i(1-p_j)^{N-i}\ .
\]
\begin{remark}
	We will follow the convention in  \cite{ChalubSouza:2017a}, and will index  vectors and matrices from $0$ to $N$, unless otherwise stated. On the other hand, $\bM$ as written above is column stochastic rather than being row stochastic as in \cite{ChalubSouza:2017a}.
\end{remark}

\begin{definition}
	\label{def:cons_lem}
	A stochastic population process $X_k$ is consistently parametrised  if, and only if, we have that
	\[
	\mathbb{E}[X_{k+1}|X_k=j]=Np_j\ \forall k.
	\]
\end{definition}

\begin{remark}
	A process is consistently parametrised, if for any given population state, the probability to select the focal type to reproduce at time $t$ is equal to the expected fraction of individuals of the focal type in the next generation. In particular 
	\[
	\Phi(j):=\frac{\mathbb{E}[X_{k+1}|X_k=j]/j}{(N-\mathbb{E}[X_{k+1}|X_k=j])/(N-j)}=\frac{N-j}{j}\frac{p_j}{1-p_j} 
	\]
	is such that 
	\[
	p_j=\frac{j\Phi(j)}{j\Phi(j)+N-j}\qquad j=1,\dots,N-1
	\]
	and therefore, $\Phi(j)$, the so called \emph{Darwinian fitness}, is directly identified as a proxy to the reproductive probability. As a consequence of Lemma~1 in \citet{ChalubSouza:2017a}, we conclude that the WF process is consistently parametrised.
\end{remark}

\begin{definition}
	\label{def:adm_vfp}
	We will say that $\bF\in [0,1]^{N+1}$ is an admissible vector of fixation probabilities, if  $\bF\in \lbrace0\rbrace\times (0,1)^{N-1}\times\lbrace1\rbrace\subset \R^{N+1}$. Also, we will say that a vector $\mathbf{a}\in\R^n$ is \emph{strictly increasing} (non-decreasing) if $a_i<a_{i+1}$ ($a_i\le a_{i+1}$, respec.) for $i=0,\dots, n-1$. Finally, an admissible fixation vector will be termed: \textsf{regular}, if it is strictly increasing and \textsf{weakly-regular} if it is non-decreasing. Otherwise, it will be termed non-regular. 
\end{definition}

\begin{theorem}
	\label{thm:wf_fixation}
Every vector of fixation probabilities of a WF process is admissible. Conversely, given an admissible vector of fixation probabilities $\bF$, then there exists at least one WF process, such that $\bF$ is the associated fixation vector. In addition, if $\bF$ is regular, then this process is unique.   
\end{theorem}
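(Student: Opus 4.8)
The plan is to reduce everything to a single scalar equation per interior state by exploiting the structure of $\bM$. By first-step analysis, the fixation vector $\bF$ (the probability of absorption at state $N$, i.e.\ fixation of type \A) satisfies the harmonic system
\[
F_j = \sum_{i=0}^{N} M_{ij} F_i, \qquad j = 1,\dots,N-1,
\]
with $F_0 = 0$ and $F_N = 1$ coming from the two absorbing states. The key observation is that the $j$-th column of $\bM$ depends on $\bp$ only through $p_j$, so writing
\[
G(p) \bydef \sum_{i=0}^{N}\binom{N}{i} p^i (1-p)^{N-i} F_i
\]
for the Bernstein polynomial of the data $\bF$, the harmonic equation at state $j$ becomes simply $F_j = G(p_j)$. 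Thus the whole system \emph{decouples} into $N-1$ independent scalar equations, one per unknown $p_j$; this is the structural fact that drives all three claims.

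For the forward direction I would argue by a strong maximum principle. Assuming the interior TSPs lie in $(0,1)$ --- so that $M_{ij}>0$ for every $i$ (a full-support binomial) --- suppose $\bF$ attained its maximum at an interior state $j^{*}$. Equality in $F_{j^{*}} = \sum_i M_{ij^{*}} F_i$ would then force $F_i = F_{j^{*}}$ for all $i$, contradicting $F_0 = 0 \neq 1 = F_N$. Hence the extrema of $\bF$ occur only at the boundary, giving $0 < F_j < 1$ for every interior $j$; together with $F_0=0$ and $F_N=1$ this is exactly admissibility.

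For the converse (existence) I would note that $G(0)=F_0=0$, $G(1)=F_N=1$, and $G$ is a polynomial, hence continuous. Given an admissible $\bF$, each interior value $F_j$ lies strictly between $G(0)$ and $G(1)$, so the intermediate value theorem yields some $p_j\in(0,1)$ with $G(p_j)=F_j$. Setting $p_0=0$ and $p_N=1$ produces a valid TSP vector $\bp$, whose associated WF process realises $\bF$ by construction.

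Finally, for uniqueness under regularity I would use the classical derivative identity for Bernstein polynomials,
\[
G'(p) = N\sum_{i=0}^{N-1}(F_{i+1}-F_i)\binom{N-1}{i}p^i(1-p)^{N-1-i},
\]
obtained by differentiating the basis and reindexing. If $\bF$ is regular, every forward difference $F_{i+1}-F_i$ is strictly positive, so $G'(p)>0$ on $(0,1)$; thus $G$ is strictly increasing on $[0,1]$ and injective, and the solution $p_j$ of $F_j=G(p_j)$ is unique for each $j$, giving uniqueness of the process. The only genuinely delicate point is the decoupling together with the recognition of $G$ as a Bernstein polynomial --- once that is in place the three parts are, respectively, a maximum principle, the intermediate value theorem, and strict monotonicity. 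I would take some care to confirm that the consistent-parametrisation setup (positive, finite Darwinian fitness) indeed forces the interior TSPs into $(0,1)$, which is precisely what the forward maximum-principle argument requires.
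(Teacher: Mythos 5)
Your proposal is correct and follows essentially the same route as the paper: your $G$ is exactly the Bernstein polynomial $\Upsilon_{\bF}$ of Lemma~\ref{lem:upsilon}, your decoupling of the harmonic system is its part (4), your intermediate-value argument for existence is parts (1)--(2), and your strict-monotonicity argument for uniqueness under regularity is part (3). The only ingredient the paper delegates entirely to \citet{ChalubSouza:2017a} is the forward (admissibility) direction, which your maximum-principle argument --- together with your correctly flagged check that interior TSPs lie in $(0,1)$ --- handles soundly.
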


\begin{remark}
	Definition~\ref{def:adm_vfp} is a compacted version of Definitions~4 and~5 in \citet{ChalubSouza:2017a}, while the first part of Theorem~\ref{thm:wf_fixation} is contained in Proposition~1 and the second part is  Theorem~5 of \citet{ChalubSouza:2017a}. WF processes with regular fixation were, by extension,  termed regular in \citet{ChalubSouza:2017a}. Theorem~\ref{thm:wf_fixation} has an additional consequence that is of interest: regular WF processes are uniquely characterised by their fixation. In particular, all their transient properties are fully determined by their fixation vector.
\end{remark}

In what follows, we will need to recall the auxiliary results used to prove the second part of Theorem~\ref{thm:wf_fixation}. 

\begin{lemma}
	\label{lem:upsilon}
	Let $\bF$ be an admissible fixation vector, and consider 
	\begin{equation}
	\label{eq:upsilon}
	\Upsilon_{\bF}(p)=\sum_{i=0}^NF_i\binom{N}{i}p^i(1-p)^{N-i}\ .
	\end{equation}
	Then the following holds true:
	\begin{enumerate}
		\item $\Upsilon_{\bF}(0)=F_0=0$, $\Upsilon_{\bF}(1)=F_N=1$, and for $p\in(0,1)$, $\Upsilon_{\bF}(p)\in(0,1)$.
		\item $\Upsilon_{\bF}:[0,1]\to[0,1]$ is continuous and onto.
		\item If $\bF$ is regular, then $\Upsilon_{\bF}$ is an increasing function.
		\item Let $\bp$ be a TSP vector and consider the associated WF process that will denote by $\bM(\bp)$. Then $\bF$ is the corresponding fixation probability vector if, and only if, we have
		\[
		\Upsilon_{\bF}(p_j)=F_j,\quad j=1,\ldots,N-1.
		\]
	\end{enumerate}
\end{lemma}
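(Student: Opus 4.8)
The plan is to recognise $\Upsilon_{\bF}$ as the degree-$N$ Bernstein polynomial attached to the control values $F_0,\dots,F_N$, and to extract all four claims from two elementary properties of the Bernstein basis $B_{i,N}(p):=\binom{N}{i}p^i(1-p)^{N-i}$: these form a partition of unity, $\sum_{i=0}^N B_{i,N}(p)=1$ (the binomial theorem applied to $\bigl(p+(1-p)\bigr)^N$), and each is strictly positive on $(0,1)$, while $B_{i,N}(0)=\delta_{i0}$ and $B_{i,N}(1)=\delta_{iN}$.

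For item (1) I would evaluate at the endpoints using these $\delta$-values, which isolate the $i=0$ and $i=N$ terms and give $\Upsilon_{\bF}(0)=F_0=0$ and $\Upsilon_{\bF}(1)=F_N=1$ by admissibility. For $p\in(0,1)$, partition of unity and strict positivity exhibit $\Upsilon_{\bF}(p)$ as a convex combination of the $F_i$ with all weights strictly positive; since $\min_i F_i=0$ and $\max_i F_i=1$ are each attained at a single index, the combination lies strictly between them, so $\Upsilon_{\bF}(p)\in(0,1)$. Item (2) is then immediate: $\Upsilon_{\bF}$ is a polynomial, hence continuous, and the intermediate value theorem between $\Upsilon_{\bF}(0)=0$ and $\Upsilon_{\bF}(1)=1$, together with the range bound from item (1), shows the image equals $[0,1]$.

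For item (3) I would differentiate using the telescoping identity $\frac{\rd}{\rd p}B_{i,N}=N\bigl(B_{i-1,N-1}-B_{i,N-1}\bigr)$ (with out-of-range basis polynomials read as zero); reindexing yields
\[
\Upsilon_{\bF}'(p)=N\sum_{i=0}^{N-1}(F_{i+1}-F_i)\,B_{i,N-1}(p).
\]
Regularity makes every forward difference $F_{i+1}-F_i$ strictly positive, and the lower-degree basis polynomials are strictly positive on $(0,1)$, so $\Upsilon_{\bF}'>0$ there; continuity on $[0,1]$ then upgrades this to strict monotonicity on the whole interval.

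The real content is item (4), and this is where the index bookkeeping matters most. The key observation is that, since $M_{ij}=B_{i,N}(p_j)$, the $j$-th entry of $\bM^{\!\top}\bF$ is exactly $\sum_{i=0}^N F_i B_{i,N}(p_j)=\Upsilon_{\bF}(p_j)$. Because $p_0=0$ and $p_N=1$ make states $0$ and $N$ absorbing, the fixation vector is the unique harmonic vector for the chain with boundary data $F_0=0$, $F_N=1$, i.e. the unique solution of the first-step relations $F_j=\sum_i M_{ij}F_i$ over the transient states $j=1,\dots,N-1$. Substituting the identity above rewrites these relations precisely as $\Upsilon_{\bF}(p_j)=F_j$, while the endpoint conditions are supplied automatically by admissibility of $\bF$. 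Hence $\bF$ is the fixation vector if and only if these $N-1$ equations hold, which is the claim. The one subtlety to get right is the transpose forced by the column-stochastic convention flagged in the Remark---checking that first-step analysis contracts $\bF$ against the $j$-th column of $\bM$ rather than its $j$-th row---after which uniqueness of harmonic functions for a finite absorbing chain closes the argument.
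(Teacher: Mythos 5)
Your proof is correct and follows essentially the same route the paper sketches in its remark after the lemma: items (1)--(2) by direct endpoint evaluation and the partition-of-unity/positivity of the Bernstein basis, item (3) via the standard Bernstein derivative identity turning regularity into positivity of $\Upsilon_{\bF}'$, and item (4) via the key identity $\left(\bF^T\bM(\bp)\right)_j=\Upsilon_{\bF}(p_j)$ combined with first-step analysis for the absorbing chain. The only difference is one of completeness: the paper defers the details to prior work and a brief remark, while you supply them explicitly (including the uniqueness-of-harmonic-extension point needed for the ``only if'' direction), which is a faithful filling-in rather than a new approach.
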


\begin{remark}
	Lemma~\ref{lem:upsilon} is a collection of results that are somewhat scattered in \citet{ChalubSouza:2017a}. The first part of (1) and (2) are easily checked --- admissibility comes into play only in the second part of (1). (3) is a well-known property of Bernstein polynomials --- cf.~\citet{Phillips:2003,Gzyl2003}.  (4) is the actual reason why $\Upsilon_{\bF}$ is an important object, and the easiest way to verify it is to check, from the definitions, that  $\left(\bF^T\bM(\bp)\right)_j=\Upsilon_{\bF}(p_j)$.
\end{remark}

\section{The inverse problem for the fitness}

\subsection{The problem setup}

In Section we will elaborate on the inversion of  fixation into relative fitness, and provide some additional results

\begin{problem}[Inverse problem for fitness]
	\label{probl:ipf}
	Let us assume that we are either  given a vector of admissible fixation of probabilities $\bF\in\R^{N+1}$ or a fixation function $f:[0,1]\to[0,1]$ satisfying(i) $f(0)=0$; (ii) $f(1)=1$; (iii) $0<f(x)<1$, $x\in(0,1)$, together with a population size $N$, such that $F_j=f(j/N)$.
	The inverse problem for fitness is to find $\Phi(j)$, $j-1,\ldots,N-1$, such that, if we let
	\[
	p_j=\frac{j\Phi(j)}{j\Phi(j)+N-j},\quad j=1,\ldots,N-1,
	\]
	then we have
	\[
	\bF^T\bM(\bp)=\bF^T,
	\]
\end{problem}
where $\bM(\bp)$ is the WF matrix for TSP $\bp$.
In view of Theorem~\ref{thm:wf_fixation}, Problem~\ref{probl:ipf} always has an exact solution. The next Lemma will be helpful when dealing with the inverse problem in a more concrete way:
\begin{lemma}
	\label{lem:fix2relfit}
	For $y\in[0,1]$, let $\Upsilon_{\bF}^{-1}(y)=\lbrace x\in[0,1]\,:\, \Upsilon_{\bF}(x)=y\rbrace$. Let $\bF'\in [-1,1]^N$ be defined by $F'_i=F_{i+1}-F_i$, $i=0,\ldots,N$, and define $\mathrm{sgn}(\bF')$ as the number of sign changes in $\bF'$.  Then, for all $y\in[0,1]$ we have that $1\leq \#\Upsilon_{\bF}^{-1}(y)\leq 1+\mathrm{sgn}(\bF')$. In particular, it follows that we can always define the inverse map of $\Upsilon_{\bF}$ as a set-valued map, and hence we have
	\begin{equation}
	\label{eq:fix2relfit}
	\Phi(j)=\frac{N-j}{j}\frac{\Upsilon_{\bF}^{-1}(F_j)}{1-\Upsilon_{\bF}^{-1}(F_j)}
	\end{equation}
\end{lemma}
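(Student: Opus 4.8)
The plan is to establish the two-sided bound first and then read off the formula from it. The lower bound $\#\Upsilon_{\bF}^{-1}(y)\ge 1$ is immediate from Lemma~\ref{lem:upsilon}(2): since $\Upsilon_{\bF}:[0,1]\to[0,1]$ is onto, every $y\in[0,1]$ has at least one preimage. For the boundary values $y=0$ and $y=1$, part~(1) of the same Lemma gives $\Upsilon_{\bF}(p)\in(0,1)$ for $p\in(0,1)$ together with $\Upsilon_{\bF}(0)=0$ and $\Upsilon_{\bF}(1)=1$; hence $\Upsilon_{\bF}^{-1}(0)=\lbrace0\rbrace$ and $\Upsilon_{\bF}^{-1}(1)=\lbrace1\rbrace$, each a single point, so the claimed bounds hold trivially there. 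It therefore remains to bound $\#\Upsilon_{\bF}^{-1}(y)$ from above for $y\in(0,1)$, and for such $y$ every solution of $\Upsilon_{\bF}(x)=y$ lies in the open interval $(0,1)$, because the endpoint values $0$ and $1$ differ from $y$.

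Fix $y\in(0,1)$ and set $g(x)=\Upsilon_{\bF}(x)-y$, so that $\#\Upsilon_{\bF}^{-1}(y)$ equals the number of distinct zeros of $g$ in $(0,1)$. The key observation is the differentiation rule for the Bernstein basis $\bern{n}{i}(x)=\binom{n}{i}x^i(1-x)^{n-i}$, namely
\[
\Upsilon_{\bF}'(x)=N\sum_{i=0}^{N-1}(F_{i+1}-F_i)\,\bern{N-1}{i}(x)=N\sum_{i=0}^{N-1}F'_i\,\bern{N-1}{i}(x),
\]
which exhibits $g'=\Upsilon_{\bF}'$ as a degree-$(N-1)$ Bernstein polynomial whose coefficient sequence is $(F'_0,\dots,F'_{N-1})$ up to the positive factor $N$. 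By the variation-diminishing property of the Bernstein basis --- cf.~\citet{Phillips:2003} --- the number of zeros of $g'$ in $(0,1)$, counted with multiplicity, is at most the number of sign changes in this coefficient sequence, which is exactly $\mathrm{sgn}(\bF')$ since multiplying by $N>0$ changes no signs. Now suppose $g$ had $r$ distinct zeros in $(0,1)$; by Rolle's theorem $g'$ would then have at least $r-1$ zeros in $(0,1)$, whence $r-1\le\mathrm{sgn}(\bF')$, i.e. $r\le 1+\mathrm{sgn}(\bF')$. This establishes the upper bound, and note that the bound on $g'$ is uniform in $y$, since $g'=\Upsilon_{\bF}'$ does not involve $y$.

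With $1\le\#\Upsilon_{\bF}^{-1}(y)\le 1+\mathrm{sgn}(\bF')$ in hand, the preimage is always nonempty, so the set-valued inverse $\Upsilon_{\bF}^{-1}$ is well defined on all of $[0,1]$. Formula~\eqref{eq:fix2relfit} is then obtained by combining two earlier relations: the identity $\Phi(j)=\tfrac{N-j}{j}\tfrac{p_j}{1-p_j}$ from the Remark following Definition~\ref{def:cons_lem}, and Lemma~\ref{lem:upsilon}(4), which states that $\bF$ is realised exactly when $\Upsilon_{\bF}(p_j)=F_j$, i.e. $p_j\in\Upsilon_{\bF}^{-1}(F_j)$; substituting each admissible choice of $p_j$ yields~\eqref{eq:fix2relfit}, now read as a set-valued assignment. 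The main technical obstacle is the upper bound, and specifically the invocation of the variation-diminishing property; the alternative of working directly with $\Upsilon_{\bF}-y=\sum_i(F_i-y)\bern{N}{i}$ would instead require the discrete Rolle inequality $S(F_i-y)\le 1+\mathrm{sgn}(\bF')$, relating the number of sign changes $S(\cdot)$ of a sequence to those of its successive differences, with extra care needed when $y$ coincides with some $F_i$ (zero entries). Routing the argument through $\Upsilon_{\bF}'$ and the continuous Rolle theorem, as above, sidesteps that bookkeeping.
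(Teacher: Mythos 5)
Your proof is correct, and it reaches the key upper bound by a genuinely different route than the paper. The paper's own proof is geometric and works directly on $\Upsilon_{\bF}$: it views $\Upsilon_{\bF}$ as a planar B\'ezier curve whose control polygon is given by $\bF$, invokes the total-variation-diminishing property of \citet{lane1983geometric} to conclude that the curve oscillates no more than its control polygon, and then bounds the number of zeros by the number of oscillations plus one; the identification of the polygon's oscillations across a horizontal line with $\mathrm{sgn}(\bF')$ is left implicit --- which is exactly the discrete bookkeeping you flag and deliberately avoid. Your argument instead differentiates: the identity $\Upsilon_{\bF}'=N\sum_{i=0}^{N-1}F'_i\,\bern{N-1}{i}$ makes $\mathrm{sgn}(\bF')$ appear for structural reasons, as the number of sign changes of the Bernstein coefficients of the derivative; the variation-diminishing (Descartes-type) property of the Bernstein basis then bounds the zeros of $\Upsilon_{\bF}'$ in $(0,1)$ by $\mathrm{sgn}(\bF')$, and Rolle's theorem converts this into $\#\Upsilon_{\bF}^{-1}(y)\le 1+\mathrm{sgn}(\bF')$, uniformly in $y$ since $\Upsilon_{\bF}'$ does not depend on $y$. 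Both proofs ultimately rest on the same variation-diminution phenomenon for the Bernstein/B\'ezier representation, so the difference is one of implementation rather than principle: the paper's version is shorter and more pictorial, while yours is more self-contained and rigorous at the point where the paper is informal (the word ``oscillate'' never gets a precise meaning there). Your handling of the endpoints $y\in\{0,1\}$, the lower bound via surjectivity, and the derivation of Equation~\eqref{eq:fix2relfit} by combining the Darwinian-fitness identity with Lemma~\ref{lem:upsilon}(4) all match the paper's treatment (the last step is not even spelled out in the paper's proof, so your making it explicit is a small improvement).
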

\begin{proof}
	The lower bound comes from the fact that $\Upsilon_{\bF}$ is onto. The upper bound comes from the fact that we can see $\Upsilon_{\bF}$ as a two-dimensional  Bézier curve and $\bF$ as its control polygon. Hence,  it satisfies the so-called total variation diminishing property \citep{lane1983geometric} and it must oscillate less than its control polygon. The number of zeros of a function is  bounded above by the number of oscillations plus one. 
\end{proof}
From a theoretical viewpoint, Equation~\ref{eq:fix2relfit} encodes all the information  on the mapping from  fixation probabilities into relative fitnesses. 
\begin{definition}
	The critical set of $\Upsilon_{\bF}$ is given by
	\[
\mathrm{Crit}\left(\Upsilon_{\bF}\right)=\bigcup_{y\,:\, \Upsilon_{\bF}'(y)=0} \Upsilon_{\bF}^{-1}(y).
\]
Since $\Upsilon_{\bF}$ is a polynomial, its critical set is always finite. 
\end{definition}
We are now ready to discuss the cases:
\begin{description}
\item[$\Upsilon_{\bF}$ is increasing]: In this case, $\Upsilon_{\bF}^{-1}$ is well defined and smooth, except over $\mathrm{Crit}\left(\Upsilon_{\bF}\right)$ where it will be only continuous.
\item [$\Upsilon_{\bF}$ is not increasing]: In this case, $\Upsilon_{\bF}^{-1}$ is defined only as set-valued function and further work is necessary. 
\end{description}
In order to deal with the non-monotonic case, we begin with a definition:
\begin{definition}
	A \textsl{branch} of a set-valued function is a choice of a single element in each of its set-values, such that $\Upsilon_{\bF}^{-1}$ is well defined and smooth, except over $\mathrm{Crit}\left(\Upsilon_{\bF}\right)$,  where it  will have to be discontinuous  at least over a point. 
\end{definition}

\begin{remark}
Branches are not unique, and we will single out two of them:
\[
\Upsilon_{\bF}^{-1,+}(y)= \max \Upsilon_{\bF}^{-1}(y),
\]
with a similar definition using $\min$ for $\Upsilon_{\bF}^{-1,-}$.
\end{remark}

We now present a result on the stability of this recovering:

\begin{lemma}
	Let $\Phi(j)$ be a solution for input $F_j$ (or the solution, in the increasing case), and let $\Phi(j)+\Delta\Phi(j)$ be the solution for input $F_j+\Delta F_j$. Then
		\begin{equation}
	\label{eq:error_bars_new}
	\Upsilon_{\bF}'\left(p_j\right)(1-p_j)^2\frac{j}{N-j}\Delta\Phi(j)=\Delta F_j,\quad j=1,\ldots,N-1.
	\end{equation}
\end{lemma}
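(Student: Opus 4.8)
The plan is to derive \eqref{eq:error_bars_new} as the first-order (linearised) sensitivity relation between the datum $F_j$ and the recovered fitness $\Phi(j)$, by implicit differentiation of the two identities that characterise the solution of Problem~\ref{probl:ipf}. These are: (a) the fixation condition $\Upsilon_{\bF}(p_j)=F_j$ furnished by part~(4) of Lemma~\ref{lem:upsilon}, which links the TSP $p_j$ to the datum $F_j$; and (b) the defining relation $p_j = j\Phi(j)/\bigl(j\Phi(j)+N-j\bigr)$, which links $p_j$ to the fitness $\Phi(j)$. Composing (a) and (b) expresses $F_j$ as a function of $\Phi(j)$ along the solution, and differentiating this composition produces the desired proportionality between $\Delta F_j$ and $\Delta\Phi(j)$.

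First I would linearise (b). Writing $p=j\Phi/(j\Phi+N-j)$ and differentiating gives $\rd p/\rd\Phi = j(N-j)/(j\Phi+N-j)^2$; using $1-p=(N-j)/(j\Phi+N-j)$ this simplifies to $\rd p/\rd\Phi = \tfrac{j}{N-j}(1-p)^2$, so that to first order $\Delta p_j = \tfrac{j}{N-j}(1-p_j)^2\,\Delta\Phi(j)$. Next I would linearise (a), which to leading order reads $\Upsilon_{\bF}'(p_j)\,\Delta p_j = \Delta F_j$. Eliminating $\Delta p_j$ between the two relations yields exactly \eqref{eq:error_bars_new}. Both differentiations are elementary, and the only factor that can vanish is $\Upsilon_{\bF}'(p_j)$, precisely on $\mathrm{Crit}(\Upsilon_{\bF})$; there the relation degenerates to $\Delta F_j=0$ at first order, consistently reflecting that the inversion $F_j\mapsto\Phi(j)$ is ill-conditioned near critical points, where one would otherwise have to divide by $\Upsilon_{\bF}'(p_j)$.

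The point requiring care is that $\Upsilon_{\bF}$ itself depends on $\bF$ through its coefficients, so a fully faithful perturbation of (a), with $\bF\mapsto\bF+\Delta\bF$ and $p_j\mapsto p_j+\Delta p_j$ varied together, produces an extra term: since $\Upsilon_{\bF}$ is linear in $\bF$, one obtains $\Upsilon_{\bF}'(p_j)\Delta p_j + \sum_{i=0}^N\Delta F_i\binom{N}{i}p_j^i(1-p_j)^{N-i}=\Delta F_j$. Formula \eqref{eq:error_bars_new} retains only the first term, which is the intended reading: we regard the Bernstein map $\Upsilon_{\bF}$ as the fixed forward operator and propagate the data perturbation through the root-finding step $p_j=\Upsilon_{\bF}^{-1}(F_j)$ alone. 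I would justify discarding the sum by noting that it is a convex combination of the $\Delta F_i$ (the binomial weights being nonnegative and summing to one), hence uniformly bounded by $\max_i|\Delta F_i|$ and well-conditioned, whereas the stability information the lemma is meant to convey resides in the possibly-singular factor $\Upsilon_{\bF}'(p_j)$. Making this term precise — or stating exactly in what sense it is to be neglected — is the main obstacle and the only genuinely non-routine aspect of the argument.
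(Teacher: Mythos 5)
Your proof is correct and is essentially the same computation as the paper's: linearise $p_j=j\Phi(j)/(j\Phi(j)+N-j)$ to get $\Delta p_j=\tfrac{j}{N-j}(1-p_j)^2\,\Delta\Phi(j)$ to first order, linearise the identity $\Upsilon_{\bF}(p_j)=F_j$ to get $\Upsilon_{\bF}'(p_j)\,\Delta p_j=\Delta F_j$, and eliminate $\Delta p_j$. The subtlety you flag in your final paragraph (the extra convex-combination term arising from the dependence of $\Upsilon_{\bF}$ on $\bF$) is silently ignored in the paper's own proof as well: it writes the perturbed identity as $\Upsilon_{\bF}(p_j+\Delta p_j)=F_j+\Delta F_j$ with the unperturbed $\Upsilon_{\bF}$, i.e., it adopts exactly your ``fixed forward operator'' reading without comment.
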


\begin{proof}
Let
	\[
	p_j+\Delta p_j=\frac{j\left(\Phi(j)+\Delta\Phi(j)\right)}{j\left(\Phi(j)+\Delta\Phi(j)\right)+N-j}.
	\]
	Then, we obtain that
	\[
	\Delta p_j=\frac{j(N-j)}{\left(N+j\left(\Phi(j)-1\right)\right)^2}\Delta\Phi(j)+\mathcal{O}\left(\left(\Delta\Phi(j)\right)^2\right).
	\]
	Similarly, from the identity
	\[
	\Upsilon_{\bF}\left(p_j+\Delta p_j\right)=F_j+\Delta F_j,\quad \Upsilon_{\bF}(p_j)=F_j,
	\]
	we obtain that
	\[
	\Upsilon_{\bF}'\left(p_j\right)\Delta p_j + \mathcal{O}\left(\left(\Delta p_j\right)^2\right) =\Delta F_j
	\]
	These two expressions yield, dropping higher order terms in $\Delta\Phi(j)$, that
	\begin{equation*}
	\Upsilon_{\bF}'\left(p_j\right)\frac{j(N-j)}{\left(j\Phi(j)+N-j\right)^2}\Delta\Phi(j)=\Delta F_j
	\end{equation*}
	Using the definition of $\Phi(j)$ in terms o $p_j$ we finally get Equation~\eqref{eq:error_bars_new}.
\end{proof}

\begin{remark}
	Equation~\ref{eq:error_bars_new} shows that, if $\Upsilon_{\bF}'>\alpha>0$, then  
	\[
	K=\left(\alpha\min_{j=1,\ldots,N-1}(1-p_j)^2\frac{j}{N-j}\right)^{-1}
	\]
	is an upper-bound on the amplification factor. Naturally, $K$ can b very large if $\Upsilon_{\bF}$ has a plateau-like form, at least over some subinterval, which might lead to accuracy loss of the inversion. It also shows that, when $\Upsilon_{\bF}$ is non-monotonic, then the inversion might also lose precision, which was already hinted by the non-continuity of the branches along the critical set of $\Upsilon_{\bF}$.
\end{remark}
	  
\subsection{Computational aspects}
\paragraph{Increasing $\Upsilon_{\bF}$:}

In this case, $\Upsilon_{\bF}^{-1}$ is single valued, and hence Equation~\ref{eq:fix2relfit} is  uniquely defined as function of $F_j$. From a numerical viewpoint, we let $Q_j(p)=\Upsilon_{\bF}(p)-F_j$ and solve it either using a good shelve root finder or using a more specific code. The former should work in most cases, but might fail for fixation probabilities the have a plateau-like behaviour --- cf. \cite{ChalubSouza16}. The latter can be obtained by using a stable numerical evaluation of $\Upsilon_{\bF}$ by de Casteljeau's algorithm together with a combination  of  bisection and continuation methods --- cf. \citep{press1992numerical} and \citep{allgower2012numerical}.

\paragraph{Non-increasing $\Upsilon_{\bF}$:}

In this case, Equation~\eqref{eq:fix2relfit} yields a set-valued map, and even if we assume, for the time being, that we are able to approximate all set values accurately, we are still left with the choice of elements in the set values that are not singletons. Here there are two main avenues: (i) since Equation~\eqref{eq:fix2relfit} is only to be enforced on a finite number of points, we can view it as a discrete problem and then consider some subset of all possible combinations --- e.g. taking the minimum or maximum from each set value; (ii) consider it as continuous problem, and then choose a single-valued version (SVV) for $\Upsilon_{\bF}^{-1}$ that is sufficiently regular. Indeed, since $\Upsilon_{\bF}$ is locally invertible by the Inverse Function Theorem outside its critical points, we can always  restrict the discontinuities  to the inverse images of the critical points of $\Upsilon_{\bF}$, and obtain  SVVs that are semicontinuous (either lower or upper). It should be noticed that these approaches are not necessarily incompatible: for instance, taking always the minimum or the maximum yields such regular SVVs --- the former is lower semicontinuous, while the latter is  upper semicontinuous.

We now turn back our attention to the numerical approximation of the set values of $\Upsilon_{\bF}^{-1}$. In order to do this, we need to approximate all the roots for each $Q_j$. For this task, there are many algorithms that allow one to isolate the roots of a polynomial in a given interval as the Vincent-Akritas-Strzebo\'nski (VAS) method or  Sturm sequences or Boudan-Fourier theorem combined with the BN algorithm --- cf. \citep{rouillier2004efficient,alesina2000vincent}. Somewhat like in the previous case, these algorithms might fail when working with fixation patterns that mix oscillatory-like with plateau-like regions.

In what follows, we will assume that Equation~\eqref{eq:fix2relfit} has been inverted by a shelve numerical method and hence that $\Phi(j)$ is available for $j=1,\ldots,N-1$.

\section{Games from given relative fitnesses vectors}
\label{sec:relfit2games}

Our aim now is to identify when a fitness recovered by the procedure described in the previous Section can be explained by EGT.

\subsection{Problem setup}

Recall \citep{Gokhale:Traulsen:2010,Kurokawa:Ihara:2009,Lessard:2011} that in a two-strategy, symmetric $d$-player game,  each individual in a  population of size $N$  interacts at once with $d-1$ other individuals; if there are $k$ type \A individuals among the $d-1$ opponents of the focal individual, then her pay-off is $a_k$ (if she is of type \A) or $b_k$ (if she is of type \B).  Therefore, a symmetric two-strategy $d$-player game is defined by two vectors $\mathbf{a}^{(d)},\mathbf{b}^{(d)}\in\R^{d}$. 

Since  every group of $d$ interacting individuals is chosen at random from the entire population, the average pay-off of type \A and \B individuals is given by the hyper-geometric distribution. Thus, we have for $j=1,\ldots,N-1$ that the average pay-off is 
\begin{equation}
\label{eq:dpp}
\varphi_\A(j)=\sum_{k=0}^{d-1}\frac{\binom{j-1}{k}\binom{N-j}{d-1-k}}{\binom{N-1}{d-1}}a_k^{(d)}
\quad\text{and}\quad
\varphi_\B(j)=\sum_{k=0}^{d-1}\frac{\binom{j}{k}\binom{N-j-1}{d-1-k}}{\binom{N-1}{d-1}}b_k^{(d)}.
\end{equation}
 we need to solve for the coefficients $\mathbf{a}^{(d)}$ and $\mathbf{b}^{(d)}$ the following $N-1$ equations:
\begin{equation}
\label{eq:fit_gam}
\Phi(j)=\frac{\varphi_\A\left(j\right)}{\varphi_\B\left(j\right)},\quad j=1,\ldots,N-1.
\end{equation}
Before we proceed any further, we want to point out two issues:

\begin{enumerate}
	\item If $\lambda\in\R$, then the games defined by the pay-offs  $(\mathbf{a}^{(d)},\mathbf{b}^{(d)})$ and  $(\lambda\mathbf{a}^{(d)},\lambda\mathbf{b}^{(d)})$ are  associated to the same relative fitness given by Eq.~\eqref{eq:fit_gam} --- hence, a particular  solution can be identified with a line  through the origin in $\R^{2d}$, and we  cannot expect uniqueness, unless some normalisation is chosen.
	\item Unless $d$ is sufficiently large, we cannot expect the system given in Equations~\eqref{eq:dpp} and~\eqref{eq:fit_gam} to have solutions at all.
\end{enumerate}
We will now elaborate on (2) above: Let $\bm{\Phi}=(\Phi(1),\ldots,\Phi(N-1))^T$, and $\bm{A}$ and $\bm{B}$ be $(N-1)\times d$ matrices defined by
\[
A_{jk}=\binom{j-1}{k}\binom{N-j}{d-1-k}\text{ and }B_{jk}=\binom{j}{k}\binom{N-j-1}{d-1-k},
\]
for $j=1,\ldots,N-1\text{ and }k=0,\ldots,d-1$. Also, let Let $\bm{x}=(\bm{a}^{(d)},\bm{b}^{(d)})^T$ and let $\bm{T}$ be a $(N-1)\times (2d)$ matrix such that
$T_{ij}=A_{ij}$ for $0\le j\le d-1$ and $T_{ij}=-\left(\mathrm{diag}(\bm{\Phi})\bm{B}\right)_{i,j-d}$ for $d\le j\le 2d-1$,
where $\mathrm{diag}(\bm{v})$ is the diagonal matrix whose main diagonal is given by $\bm{v}$. Then, Equations~\eqref{eq:dpp} and~\eqref{eq:fit_gam} implies $\bm{T}\bm{x}=0$, with $\bm{x}\not=0$. Conversely, if $\bm{x}\in \mathrm{Null}(\bm{T})$, and if $\bm{A}\;\bm{a}^{(d)}$ and $\bm{B}\;\bm{b}^{(d)}$ do not have null entries (in this case, the  corresponding entry vanishes for both of them), then  the entries of $\bm{x}$ are a solution to Equations~\eqref{eq:dpp} and~\eqref{eq:fit_gam}. In particular, if $d=\sfrac{N}{2}$ (for even $N$) or if $d=\sfrac{(N+1)}{2}$ (for odd $N$), then $\bm{T}$ must have a non-trivial null space, and if the extra condition holds we have a solution.

From now on, we give-up on have an exact solution and pose the following problem
\begin{problem}[Inverse problem for $d$-player games]
	\label{probl:ig}
	Given a fitness vector $\bm{\Phi}$, and $2\leq d \in \mathbb{N}$, find coefficients $\bm{a}^{(d)}$ and $\bm{b}^{(d)}$ that solve Equations~\eqref{eq:dpp} and~\eqref{eq:fit_gam} in least squares sense.
\end{problem}
Problem~\ref{probl:ig} can be recast in two ways:
\begin{enumerate}
	\item Let  $F(\bm{x})= \bm{A}\;\bm{a}^{(d)} \oslash\bm{B}\;\bm{b}^{(d)}$, where $\oslash$ denotes Hadamard division. Then, the corresponding non-linear least squares problem is given by
	\[
	\bm{x}_*=(\bm{a}_*^{(d)},\bm{b}_*^{(d)})^T\in\arg\min \|\bm{\Phi}-F(\bm{x})\|_2^2.
	\]
	\item Consider the matrix $\bm{T}$ defined above, and let $\bm{T}=\bm{U}\bm{\Sigma}\bm{V}^*$ be the SVD decomposition of $\bm{T}$, where $\bm{U}$ is $(N-1)\times(N-1)$, $\bm{\Sigma}$ is $(N-1)\times(2d)$ and $\bm{V}$ is $(2d)\times(2d)$. Then, a tentative solution is given by the last column of $\bm{V}$ --- provided it satisfies the additional condition.
\end{enumerate}

The whole procedure is summarised in Fig.~\ref{fig:direct_inverse}, noting that the map from fixation into fitness may be multivalued -- in this case, it is necessary to select a branch.

\subsection{Acceptability of a given game}
\label{sec:good_games}

For a positive vector $\bm{\omega} \in \R^{N-1}$, and a vector $\mathbf{x}\in\R^{N-1}$,  the weighted $L_\infty$ norm is defined by
\[
\|\mathbf{x}\|_{\infty,\bm{\omega}}=\max_{i=1,\ldots,N-1}|x_i|\omega_i.
\] 
For a given fixation $\bF$ and population size $N$, 
we will deem  the approximation  $\widehat{\bF}$ produced by the procedure above as \textit{a good approximation} or (\textit{acceptable}), if 

\begin{equation}
\label{eq:max_error_new}
\|\bF-\widehat{\bF}\|_{\infty,\omega}\leq \epsilon_N,
\end{equation}
where $\epsilon_N$ is the maximum tolerable error when  population size is $N$.  The use of a weighted norm allows to choose different thresholds for the different observed initial frequencies. The choice of  weight and tolerance constants are,  to a large extent,  arbitrary. In what follows we will consider $\bm{\omega}=\mathbf{1}$, and $\epsilon_N=0.01$. However,  a second natural choice would be to let $\omega_i=\left(\sfrac{i}{N}\left(1-\sfrac{i}{N}\right)\right)^{-1/2}$, and $\epsilon_N=\sfrac{\kappa}{\sqrt{N}}$, i.e., a maximum tolerable error proportional to the natural fluctuations of the stochastic system.

Let us call $\mathcal{A}_F$ the set of acceptable  game sizes, i.e., if $d\in\mathcal{A}_F$, then there exists a  fixation produced by a $d$-player game, which is agood approximation. Note that, since $N\in\mathcal{A}_F$, we have that $\mathcal{A}_F\not=\emptyset$ and that, if $d\in\mathcal{A}_F$, then $d'\in\mathcal{A}_F$ for all $d'>d$.  

\begin{definition}
Given a fixation vector $\bF$, we define its  \textsf{complexity} as 
\[
d_{\min}(F)=\min_{d'\in\mathcal{A}_F} d'.
\]
\end{definition}
In other words, the fixation  complexity is the minimum number of players necessary in a typical micro-interaction to reproduce this fixation pattern of the Wright-Fisher dynamics,  with an error not greater than the error prescribed by Equation~\eqref{eq:max_error_new}.

\section{Examples}

Let us consider a fixed population size $N$ and assume that the fixation vector $F_j$, $j=1,\dots,N-1$, $F_0=1-F_N=0$, can be obtained from sampling a continuous function $f:[0,1]\to[0,1]$, with $f(0)=0$ and $f(1)=1$, on the grid of population frequencies. Namely, we assume that $F_j=f\left(\frac{j}{N}\right)$. We will call $f$ \emph{the fixation function}.

We start by studying examples in which $f$ is an increasing function in the interval $[0,1]$; in this case  the sequence $\left(p_i\right)_{i=0,\dots,N}$ is  uniquely determined and increasing.

In Figs.~\ref{fig:dominance}, \ref{fig:coexistence}, and \ref{fig:coordination} we study the three most relevant cases inherited from two-player evolutionary game theory  studies: dominance, coexistence and coordination, respectively. However,  we consider fixation functions $f:[0,1]\to[0,1]$, with $f(0)=1-f(1)=0$, that, despite the fact that they do not arise directly from 2-player game theory, are such that their comparison to the neutral case as well as their concavity in the interval $[0,1$] make them directly comparable with the classical 2-player case, as described above. As mentioned in Section~\ref{sec:good_games}, we will take $\bm{\omega}=\mathbf{1}$ and $\epsilon_N=0.01$.

\begin{figure}
\centering
\includegraphics[width=0.47\textwidth]{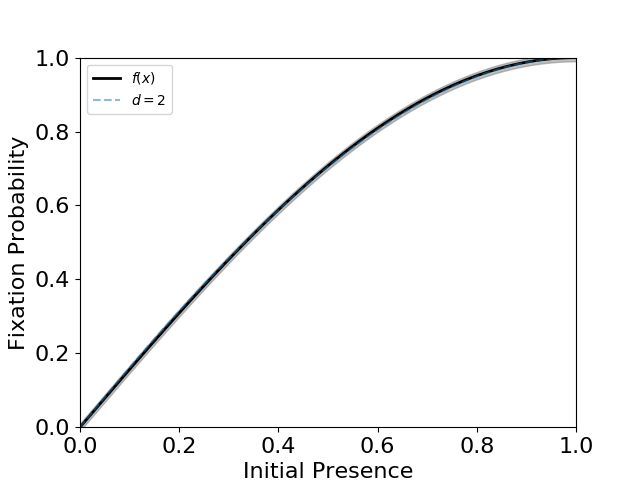}
\includegraphics[width=0.47\textwidth]{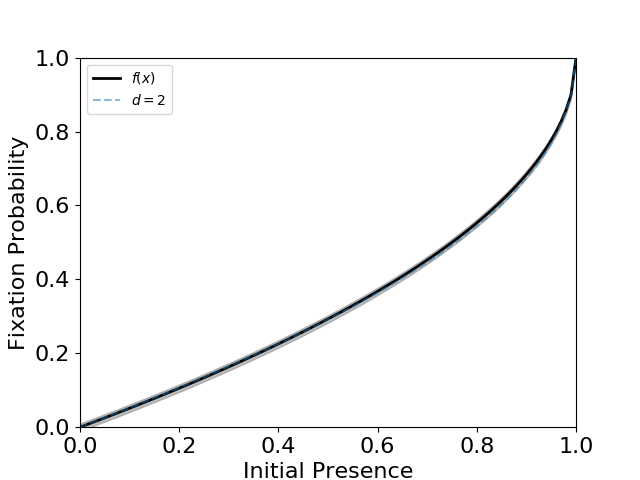}
\caption{Dominance case. Left: Fixation probability is obtained from $f(x)=\sin\left(\pi x/2\right)$. The fixation vector is given by $F_i=f(i/N)$ with $i=0,\dots,N$, and a population of $N=100$ individuals. The shadow represents the tolerable error, i.e, the region between $f(x)\pm0.01$. The fixation probability for 2-player game with pay-off vectors $\mathbf{a}^{(2)}\approx(252,   0.959)$ and $\mathbf{b}^{(2)}\approx(254,   2.16)$ is sobreposed to the fixation function and error bar (grey area).
Right: $f(x)=1-\sqrt{1-x}$, $N=100$. We find that $d=2$ provides a good approximation, with $\mathbf{a}^{(2)}\approx (2.74,-0.00870)$ and $\mathbf{b}^{(2)}\approx (2.77,0.0270)$, i.e., $d_{\min}=2$.
In the former case, \A dominates \B while the reverse occurs in the latter.}
\label{fig:dominance}
\end{figure}

\begin{figure}
\centering
\includegraphics[width=0.47\textwidth]{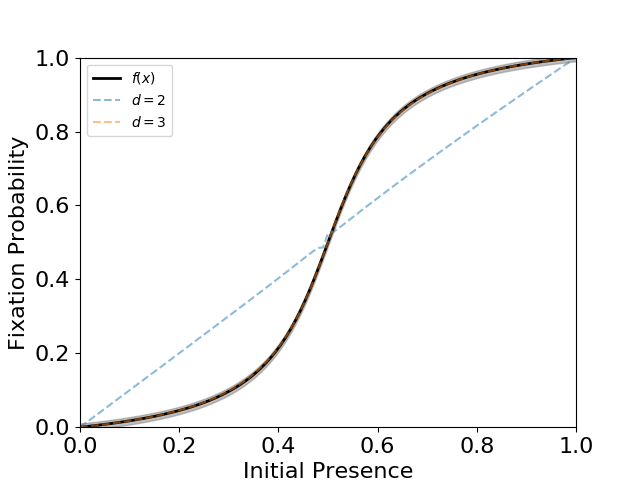}
  \includegraphics[width=0.47\textwidth]{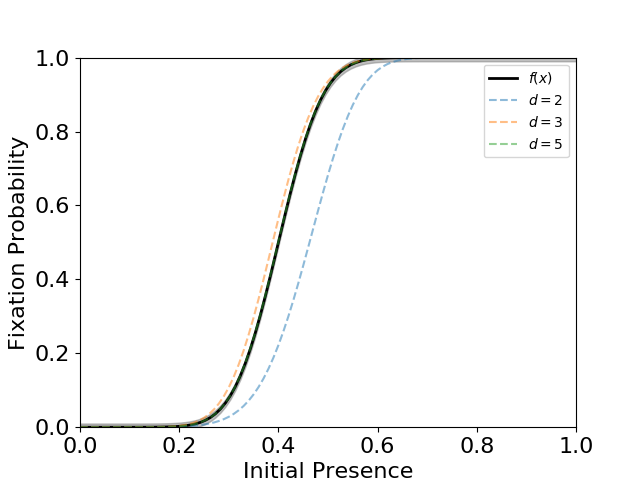}
 \caption{Coordination case, always with $N=100$ individuals. Left: $f(x)=(\arctan(10x-5)+\arctan(5))/(2\arctan(5))$. In this case, the approximation with $d=2$ clearly fails to provide a good approximation  to the fixation probability. The first good approximation is given by $d_{\min}=3$ with, $\mathbf{a}^{(3)}\approx( 4.95, -4.46,  5.23)$ and
$\mathbf{b}^{(3)}\approx( 5.25, -4.46,4.93)$.
Right: $f(x)=(\erf(10x-4)-\erf(-4))/(\erf(6)-\erf(-4))$ with $\erf(x)=\frac{2}{\sqrt{\pi}}\int_0^x\e^{-z^2}\rd z$. The first good approximation is given by $d_{\min}=5$, $\mathbf{a}^{(5)}\approx(16.6,  13.3,  3.54, -0.217, 0.142)$ and $\mathbf{b}^{(5)}\approx( 25.5,  12.3,  2.01, -0.0532,
0.00235)$.}
 \label{fig:coexistence}
 \end{figure}

\begin{figure}
\centering
 \includegraphics[width=0.47\textwidth]{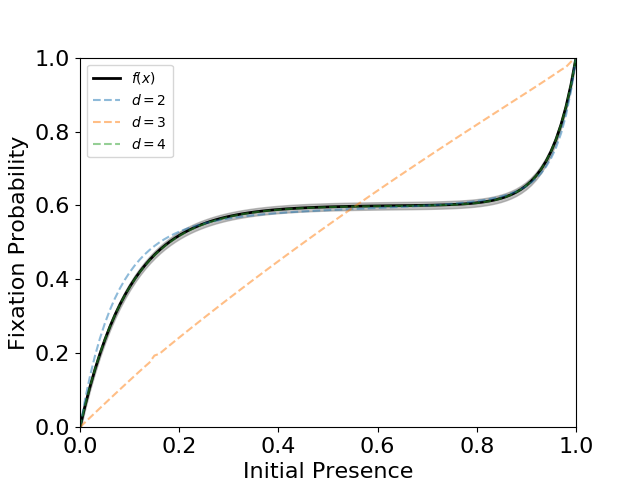}
 \includegraphics[width=0.47\textwidth]{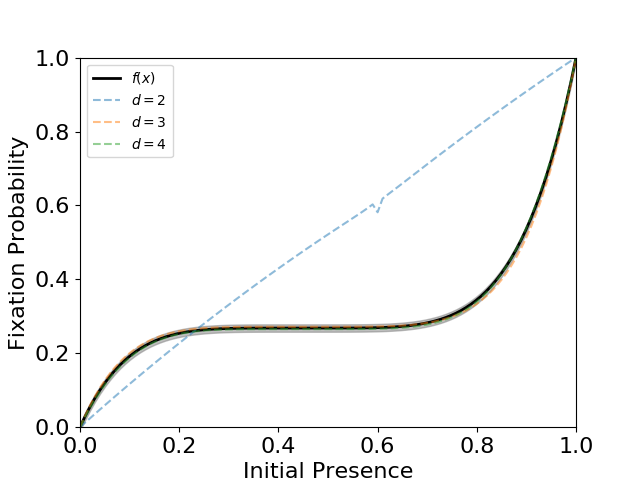}
 \caption{Coexistence case, always with $N=100$ individuals. Left: $f(x)=\frac{4}{10}\frac{1-\e^{20x}}{1-\e^{20}}+\frac{6}{10}\frac{1-\e^{-10x}}{1-\e^{-10}}$ and $N=100$. The first good approximation is given by $d_{\min}=4$, with $\mathbf{a}^{(4)}\approx(1.30\times 10^3,-157,-59.0,172)$ and $\mathbf{b}^{(4)}\approx(1.27\times 10^3,-131,-73.5,182)$. Right: $f(x)=((x-9/20)^5+(9/20)^5)/((11^5+9^5)/20^5)$. The first good approximation is given by $d_{\min}=4$ and $\mathbf{a}^{(4)}\approx(0.799, -0.349, -0.230,  1.25)$ and $\mathbf{b}^{(4)}\approx( 0.804, -0.366, -0.211,  1.24)$.}
\label{fig:coordination}
 \end{figure}

As discussed in Section~\ref{sec:prelim}, if $f$ is strictly increasing then $\Upsilon_{\bF}$ is monotonically increasing. Nevertheless, even in this case we might have quite complex fixation patterns as showed in \citep{ChalubSouza16,ChalubSouza18} and hence its numerical inversion may not be straightforward. However, if $f'(x)\geq \alpha=\mathcal{O}(1)$, then  the typical corresponding fitness patterns are not too complex --- thus, games with a small number of players will suffice.  When $f$ is no longer increasing, or if $f$ has plateau-like behaviour in some regions, then the numerical inversion of $\Upsilon_{\bF}$ is more involved. In addition, the fitness patterns are also typically more complex, and thus we will need games with a larger number of players to fit the corresponding fitness landscape. Fig.~\ref{fig:decreasing} shows that this might happen, even when there is a slightly deviation from monotonicity.

\begin{figure}
 \centering
 \includegraphics[width=0.47\textwidth]{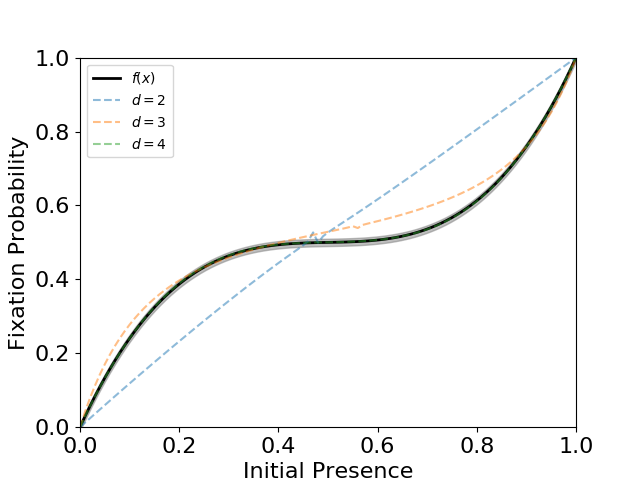}
 \includegraphics[width=0.47\textwidth]{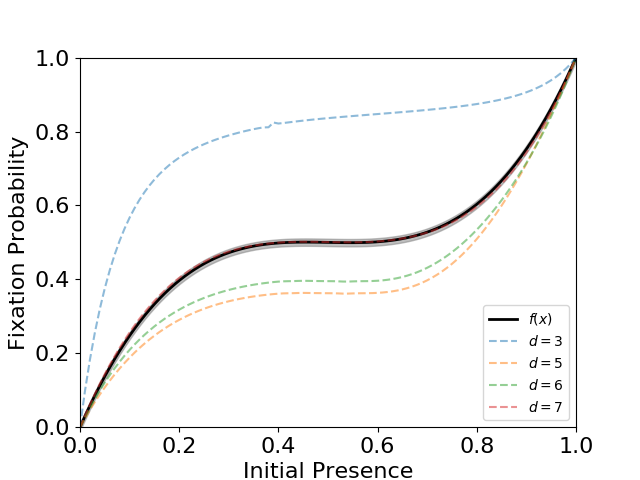}
 \caption{$f(x)=(3.9)x(1/2-x)(1-x)+x$ (left) and $f(x)=4.1x(1/2-x)(1-x)+x$ (right), with $N=100$. Despite the fact that in the right case the fixation function is only slightly decreasing in the interval $x\in(0.455,0.546)$, $d_{\min}=7$, while in the left case, where $f$ is increasing, implies that $d_{\min}=4$. Namely 
 $\mathbf{a}^{(4)}\approx  (3.35, -1.13, -0.764,  3.05)$
 and $\mathbf{b}^{(4)}\approx(3.42, -1.12, -0.775,   2.99)$ (left) and
 $\mathbf{a}^{(7)}\approx ( 1.21, -1.14,  1.10, -1.10,  1.14, -1.22,  1.35)$ and $\mathbf{b}^{(7)}\approx
( 1.34, -1.21,  1.14, -1.10,  1.11, -1.15, 1.23)$ (right). Note that a small change in the fixation function may change completely the game, in special if the function is not strictly increasing any more.}
 \label{fig:decreasing}
\end{figure}

\section{Large population limits}

In all simulations presented so far, we considered a population of $N=100$ individuals. Here, we will briefly discuss the relation between the minimum game $d_{\min}$ and the population size.

On one hand, it is clear that we may design a WF process with pay-off given by $d_0$-player game, with $d_0$ fixed irrespective of $N$. With the right set of assumptions for the pay-off dependence on $N$, it is possible to obtain a diffusion approximation and, therefore, a fixation probability. The inverse problem will then find a minimum game $d_{\min}\le d_0$, depending on the tolerable error, i.e., to provide a simplification -- an order reduction -- of the original game.

On the other hand, it is not difficult to design a family of games in which the order of the game depends on the population size. Namely, assume that the focal type has a large probability to survive if and only if its presence is larger than a certain critical fraction of the population. Assume that $p_i\approx 0$ for $i<x_{\mathrm{c}}N$ and $p_i\approx 1$ for $i>x_{\mathrm{c}}N$, with $x_{\mathrm{c}}\in(0,1)$. In other words, \A and \B collaborate with individuals of the same type, and in large groups. If the fraction of \A individuals is larger than $x_{\mathrm{c}}$, then \A prevails, otherwise \B prevails. The fixation function will then  be approximately given by $\theta(x-x_{\mathrm{c}})$, where $\theta$ is the Heaviside function. We expect that the minimum order of the game able to reproduce this fixation probability will increase in $N$. However, from the computational point of view, the required computing time seems to increase exponentially with $d$ and, therefore, we are not able to increase the order of the game arbitrarily. 

These two examples show that, for certain fixation functions, it is natural to expect  $d_{\min}$ to increase as a function of $N$ up to a certain bound, while for other fixation functions $d_{\min}$ may increase without bound. Considering that real populations can be extremely large, it is important to know \emph{a priori} in which case we are. 

At this point, we are far from providing any explanation in the behaviour of $d_{\min}$ when $N\to\infty$. We will, however, present some simulations that illustrate how the maximum error between the original fixation probability $F_i$ and the computed fixation probability $\tilde F_i$, for a given $d$, depends on $N$ in Fig.~\ref{fig:bigN}.

\begin{figure}
	\centering
	\subfloat[red][$f(x)=\sin(\pi x/2)$]{\includegraphics[width=0.48\textwidth]{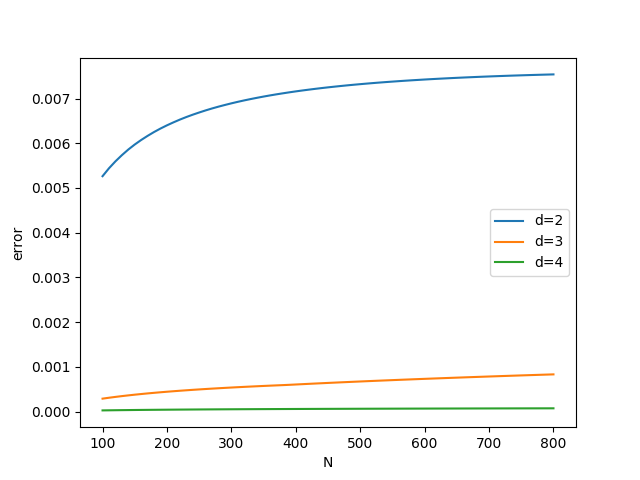}}
	\subfloat[darkgreen][$f(x)=1-\sqrt{1-x}$]{\includegraphics[width=0.48\textwidth]{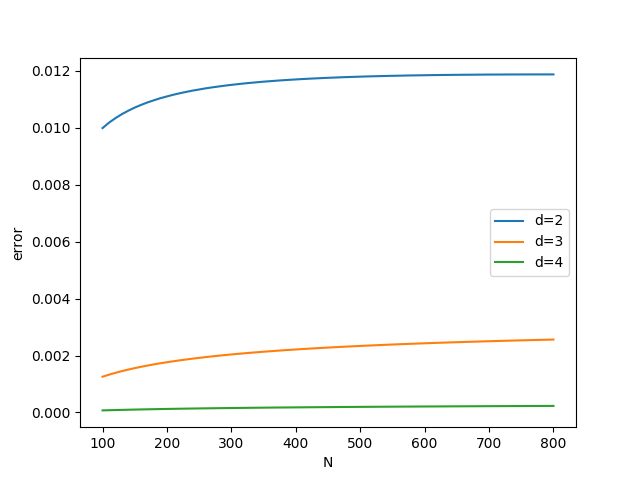}}\\
	\subfloat[blue][$f(x)=\frac{\arctan(10x-5)+\arctan(5)}{2\arctan(5)}$]{\includegraphics[width=0.48\textwidth]{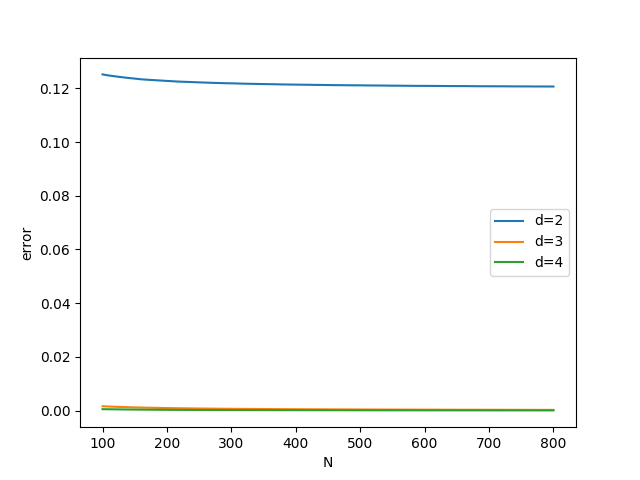}}
	\subfloat[yellow][$f(x)=\frac{4}{10}\frac{1-\e^{20x}}{1-\e^{20}}+\frac{6}{10}\frac{1-\e^{-10x}}{1-\e^{-10}}$]{\includegraphics[width=0.48\textwidth]{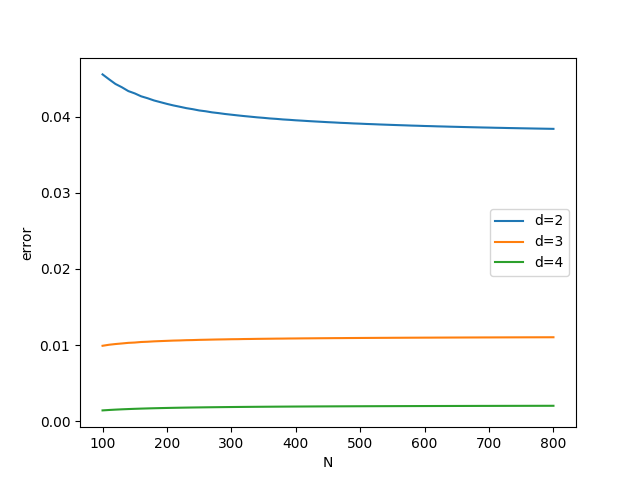}}\\
	\subfloat[cyan][$f(x)=3.9x(1/2 -x)(1 - x) + x$]{\includegraphics[width=0.48\textwidth]{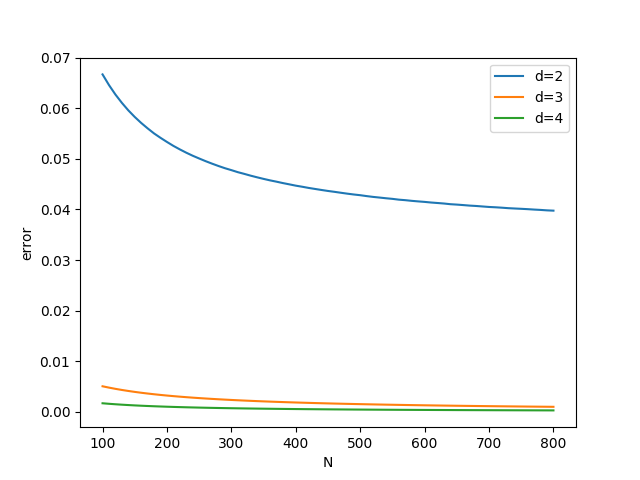}}
	\subfloat[pink][$f(x)=4.1x(1/2 -x)(1 - x) + x$]{\includegraphics[width=0.48\textwidth]{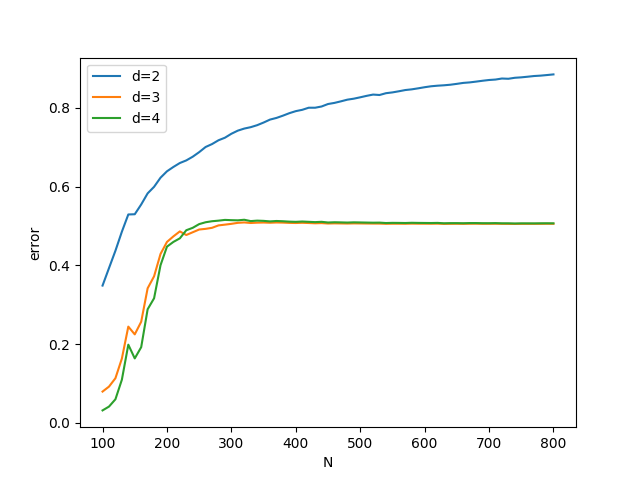}}
	\caption{Dependence of $\mathcal{E}(d,N)=\max_i|\tilde F_i-F_i|$ on $N$ for some values of $d$. For the moment, we are not able to provide a general picture of the function $\mathcal{E}$, however from our numerical experiments, we notice that its dependence on $N$ is small, albeit extremely complicated, with the important exception of the case in which $f$ is not monotonically increasing (example~(f)).}
	\label{fig:bigN}
\end{figure}

\section{Discussion}

Fitness is one of the central concepts in the mathematical description of evolutionary processes. Despite the fact that in most models, a class of simple fitness function (e.g. constant or affine) is used, there are no a priori limitations on the   functions that shall be used when modelling real problems. However, a possible fitness function, obtained say, from experiments, provides few, if any, insight in the dynamics between individuals in the population under study.

The aim of this work is to provide a link between macroscopic observables (as the fixation probability) and microscopic, first-principles, description of the population (\emph{finite population models}). Therefore, we show that any fixation pattern can be obtained as a result of the Wright-Fisher process with explicitly calculated interaction between individuals. The more precision we want in the macroscopic description, the more complex will be the game compatible with the data. However, an interesting insight in the internal modus operandi of a population can be obtained from the game that is not evident in the fixation pattern, or even in the fitness function.

In order to achieve that goal, we continued previous works from the authors  in which it is shown that any  fixation pattern in a finite population can be realised by a Wright-Fisher (WF) model. More precisely, it was shown in~\citep{ChalubSouza:2017a} that given a $N+1$  vector $\bF$ with $F_0=0$, $F_N=1$ and $0<F_i<1$, for $i=1,\ldots,N-1$, then there exists a choice of Type Selection Probabilities (TSPs) for which the corresponding WF model has exactly $F$ as its fixation vector.

One clear limitation of the present work, that deserves further investigation, is that we did not study the large population limit of games, when $N\to\infty$. From the numerical experiments, we gathered evidence that certain fixation patterns can be reproduced by game with a finite number of players, even if the population is infinite, whereas other patterns will require a number of players that increases along with population size.

We finish by noticing that EGT has become pervasive in multiple applications --- notably in the mathematical study of biological and social evolution. However, EGT classical framework is infinite, deterministic, and  well mixed populations. This work shows how it can be extended to describe the dynamics of finite populations with demographic noise --- although still well mixed and fixed size populations. Further development is possible along many avenues: including structure, adding heterogeneity, or even moving towards individual based models. Certainly, EGT has still a lot to offer.

\section*{Acknowledgements}
FACCC was partially supported by FCT/Portugal Strategic Project UID/MAT/00297/2013 (Centro de Matemática e Aplicações, Universidade Nova de Lisboa) and by a ``Investigador FCT'' grant. 
MOS was partially supported by CNPq under grants   \# 486395/2013-8 and  \# 309079/2015-2. We also thank the useful comments by two anonymous reviewers, which helped to improve the manuscript.

\bibliographystyle{abbrvnat}
\bibliography{pade}

\begin{thebibliography}{37}
\providecommand{\natexlab}[1]{#1}
\providecommand{\url}[1]{\texttt{#1}}
\expandafter\ifx\csname urlstyle\endcsname\relax
  \providecommand{\doi}[1]{doi: #1}\else
  \providecommand{\doi}{doi: \begingroup \urlstyle{rm}\Url}\fi

\bibitem[Alesina and Galuzzi(2000)]{alesina2000vincent}
A.~Alesina and M.~Galuzzi.
\newblock Vincent's theorem from a modern point of view.
\newblock \emph{Categorical Studies in Italy}, pages 179--191, 2000.

\bibitem[Allgower and Georg(2012)]{allgower2012numerical}
E.~L. Allgower and K.~Georg.
\newblock \emph{Numerical continuation methods: an introduction}, volume~13.
\newblock Springer Science \& Business Media, 2012.

\bibitem[{B\"urger}(2000)]{Burger_2000}
R.~{B\"urger}.
\newblock \emph{{The mathematical theory of selection, recombination and
  mutation.}}
\newblock Chichester: Wiley, 2000.
\newblock ISBN 0-471-98653-4/hbk.

\bibitem[Chalub and Souza(2018)]{ChalubSouza18}
F.~A. Chalub and M.~O. Souza.
\newblock Fitness potentials and qualitative properties of the wright-fisher
  dynamics.
\newblock \emph{J. Theor. Biol.}, 457:\penalty0 57 -- 65, 2018.
\newblock ISSN 0022-5193.

\bibitem[Chalub and Souza(2009)]{ChalubSouza09b}
F.~A. C.~C. Chalub and M.~O. Souza.
\newblock From discrete to continuous evolution models: a unifying approach to
  drift-diffusion and replicator dynamics.
\newblock \emph{Theor. Pop. Biol.}, 76\penalty0 (4):\penalty0 268--277, 2009.

\bibitem[Chalub and Souza(2014)]{ChalubSouza14a}
F.~A. C.~C. Chalub and M.~O. Souza.
\newblock The frequency-dependent {Wright-Fisher} model: diffusive and
  non-diffusive approximations.
\newblock \emph{J. Math. Biol.}, 68\penalty0 (5):\penalty0 1089--1133, 2014.

\bibitem[Chalub and Souza(2016)]{ChalubSouza16}
F.~A. C.~C. Chalub and M.~O. Souza.
\newblock Fixation in large populations: a continuous view of a discrete
  problem.
\newblock \emph{J. Math. Biol.}, 72\penalty0 (1-2):\penalty0 283--330, 2016.

\bibitem[Chalub and Souza(2017)]{ChalubSouza:2017a}
F.~A. C.~C. Chalub and M.~O. Souza.
\newblock On the stochastic evolution of finite populations.
\newblock \emph{J. Math. Biol.}, 75\penalty0 (6):\penalty0 1735--1774, 2017.
\newblock ISSN 1432-1416.
\newblock \doi{10.1007/s00285-017-1135-4}.
\newblock URL \url{https://doi.org/10.1007/s00285-017-1135-4}.

\bibitem[Czuppon and Gokhale(2018)]{czuppon2018disentangling}
P.~Czuppon and C.~S. Gokhale.
\newblock Disentangling eco-evolutionary effects on trait fixation.
\newblock \emph{bioRxiv}, 2018.

\bibitem[Czuppon and Traulsen(2018)]{czuppon2018fixation}
P.~Czuppon and A.~Traulsen.
\newblock Fixation probabilities in populations under demographic fluctuations.
\newblock \emph{Journal of mathematical biology}, pages 1--45, 2018.

\bibitem[Ethier and Kurtz(1986)]{EthierKurtz}
S.~N. Ethier and T.~G. Kurtz.
\newblock \emph{Markov processes}.
\newblock Wiley Series in Probability and Mathematical Statistics: Probability
  and Mathematical Statistics. John Wiley \& Sons Inc., New York, 1986.
\newblock ISBN 0-471-08186-8.
\newblock Characterization and convergence.

\bibitem[Ewens(2004)]{Ewens_2004}
W.~J. Ewens.
\newblock \emph{{M}athematical {P}opulation {G}enetics. {I}: {T}heoretical
  {I}ntroduction. 2nd ed}.
\newblock Interdisciplinary Mathematics 27. New York, NY: Springer., 2004.

\bibitem[Fisher(1922)]{Fisher1}
R.~A. Fisher.
\newblock On the dominance ratio.
\newblock \emph{Proc. Royal Soc. Edinburgh}, 42:\penalty0 321--341, 1922.
\newblock \doi{10.1007/BF02459576}.

\bibitem[Gokhale and Traulsen(2010)]{Gokhale:Traulsen:2010}
C.~S. Gokhale and A.~Traulsen.
\newblock Evolutionary games in the multiverse.
\newblock \emph{P. Natl. Acad. Sci. USA}, 107\penalty0 (12):\penalty0
  5500--5504, 2010.

\bibitem[Gokhale and Traulsen(2014)]{Gokhale:Traulsen:2014}
C.~S. Gokhale and A.~Traulsen.
\newblock Evolutionary multiplayer games.
\newblock \emph{Dyn. Games App.}, 4\penalty0 (4):\penalty0 468--488, 2014.

\bibitem[Gzyl and Palacios(2003)]{Gzyl2003}
H.~Gzyl and J.~L. Palacios.
\newblock On the approximation properties of bernstein polynomials via
  probabilistic tools.
\newblock \emph{Bolet{\'\i}n de la Asociaci{\'o}n Matem{\'a}tica Venezolana},
  10\penalty0 (1):\penalty0 5--13, 2003.
\newblock URL \url{http://eudml.org/doc/52148}.

\bibitem[Hartle and Clark(2007)]{Hartle_Clark}
D.~L. Hartle and A.~G. Clark.
\newblock \emph{Principles of Population Genetics}.
\newblock Sinauer, Massachussets, 2007.

\bibitem[Hofbauer and Sigmund(1998)]{HofbauerSigmund1998}
J.~Hofbauer and K.~Sigmund.
\newblock \emph{{Evolutionary Games and Population Dynamics}}.
\newblock Cambridge Univ. Press, Cambridge, UK, 1998.

\bibitem[Huang et~al.(2015)Huang, Hauert, and Traulsen]{huang2015stochastic}
W.~Huang, C.~Hauert, and A.~Traulsen.
\newblock Stochastic game dynamics under demographic fluctuations.
\newblock \emph{Proceedings of the National Academy of Sciences}, 112\penalty0
  (29):\penalty0 9064--9069, 2015.

\bibitem[Imhof and Nowak(2006)]{Imhof:Nowak:2006}
L.~A. Imhof and M.~A. Nowak.
\newblock {Evolutionary game dynamics in a Wright-Fisher process}.
\newblock \emph{J. Math. Biol.}, 52\penalty0 (5):\penalty0 667--681, 2006.

\bibitem[Kurokawa and Ihara(2009)]{Kurokawa:Ihara:2009}
S.~Kurokawa and Y.~Ihara.
\newblock Emergence of cooperation in public goods games.
\newblock \emph{P. Roy. Soc. B--Biol. Sci.}, 276\penalty0 (1660):\penalty0
  1379--1384, 2009.

\bibitem[Lane and Riesenfeld(1983)]{lane1983geometric}
J.~M. Lane and R.~F. Riesenfeld.
\newblock A geometric proof for the variation diminishing property of b-spline
  approximation.
\newblock \emph{Journal of Approximation Theory}, 37\penalty0 (1):\penalty0
  1--4, 1983.

\bibitem[Lessard(2011)]{Lessard:2011}
S.~Lessard.
\newblock On the robustness of the extension of the one-third law of evolution
  to the multi-player game.
\newblock \emph{Dyn. Games App.}, 1\penalty0 (3):\penalty0 408--418, 2011.

\bibitem[Nowak(2006)]{Nowak:06}
M.~A. Nowak.
\newblock \emph{Evolutionary Dynamics: {E}xploring the Equations of Life}.
\newblock The Belknap Press of Harvard University Press, Cambridge, MA, 2006.

\bibitem[Pacheco et~al.(2009)Pacheco, Santos, Souza, and
  Skyrms]{pacheco2009evolutionary}
J.~M. Pacheco, F.~C. Santos, M.~O. Souza, and B.~Skyrms.
\newblock Evolutionary dynamics of collective action in n-person stag hunt
  dilemmas.
\newblock \emph{Proceedings of the Royal Society of London B: Biological
  Sciences}, 276\penalty0 (1655):\penalty0 315--321, 2009.

\bibitem[Pena et~al.(2014)Pena, Lehmann, and N{\"o}ldeke]{pena2014gains}
J.~Pena, L.~Lehmann, and G.~N{\"o}ldeke.
\newblock Gains from switching and evolutionary stability in multi-player
  matrix games.
\newblock \emph{Journal of Theoretical Biology}, 346:\penalty0 23--33, 2014.

\bibitem[Phillips(2003)]{Phillips:2003}
G.~M. Phillips.
\newblock \emph{Interpolation and approximation by polynomials}.
\newblock CMS Books in Mathematics. Springer-Verlag New York, 2003.

\bibitem[Press et~al.(1992)Press, Teukolsky, Vetterling, and
  Flannery]{press1992numerical}
W.~H. Press, S.~A. Teukolsky, W.~T. Vetterling, and B.~P. Flannery.
\newblock \emph{{Numerical Recipes in C: The Art of Scientific Computing}}.
\newblock Cambridge Univ. Press, Cambridge, 1992.

\bibitem[Rouillier and Zimmermann(2004)]{rouillier2004efficient}
F.~Rouillier and P.~Zimmermann.
\newblock Efficient isolation of polynomial's real roots.
\newblock \emph{J. Comput. Appl. Math.}, 162\penalty0 (1):\penalty0 33--50,
  2004.

\bibitem[Smith(1982)]{JMS}
J.~M. Smith.
\newblock \emph{Evolution and the theory of games}.
\newblock Cambridge University Press, Cambridge, UK, 1982.

\bibitem[Smith and Price(1973)]{smith1973logic}
J.~M. Smith and G.~R. Price.
\newblock The logic of animal conflict.
\newblock \emph{Nature}, 246\penalty0 (5427):\penalty0 15--18, 1973.

\bibitem[Souza et~al.(2009)Souza, Pacheco, and Santos]{souza2009evolution}
M.~O. Souza, J.~M. Pacheco, and F.~C. Santos.
\newblock Evolution of cooperation under n-person snowdrift games.
\newblock \emph{Journal of Theoretical Biology}, 260\penalty0 (4):\penalty0
  581--588, 2009.

\bibitem[Taylor and Karlin(1998)]{Karlin_Taylor_intro}
H.~M. Taylor and S.~Karlin.
\newblock \emph{An introduction to stochastic modeling}.
\newblock Academic Press Inc., San Diego, CA, {T}hird edition, 1998.
\newblock ISBN 0-12-684887-4.

\bibitem[Taylor and Jonker(1978)]{taylor1978evolutionary}
P.~D. Taylor and L.~B. Jonker.
\newblock Evolutionary stable strategies and game dynamics.
\newblock \emph{Math. Biosci.}, 40\penalty0 (1-2):\penalty0 145--156, 1978.

\bibitem[Wright(1937)]{Wright2}
S.~Wright.
\newblock The distribution of gene frequencies in populations.
\newblock \emph{Proc. Nat. Acad. Sci. USA}, 23:\penalty0 307--320, 1937.
\newblock URL \url{http://www.jstor.org/stable/87433}.

\bibitem[Wright(1938)]{Wright1}
S.~Wright.
\newblock The distribution of gene frequencies under irreversible mutations.
\newblock \emph{Proc. Nat. Acad. Sci. USA}, 24:\penalty0 253--259, 1938.

\bibitem[Wu et~al.(2013)Wu, Traulsen, and Gokhale]{wu2013dynamic}
B.~Wu, A.~Traulsen, and C.~S. Gokhale.
\newblock Dynamic properties of evolutionary multi-player games in finite
  populations.
\newblock \emph{Games}, 4\penalty0 (2):\penalty0 182--199, 2013.

\end{thebibliography}

\end{document}